\DeclareTextFontCommand{\emph}{\slshape}
\renewcommand{\paragraph}{%
	\@startsection{paragraph}{4}%
	{\z@}{1.75ex \@plus 1ex \@minus .2ex}{-0.7em}%
	{\normalfont\normalsize\bfseries}%
}
\let\originalleft\left
\let\originalright\right
\renewcommand{\left}{\mathopen{}\mathclose\bgroup\originalleft}
\renewcommand{\right}{\aftergroup\egroup\originalright}
\pgfplotsset{compat=1.10}
\setlist[enumerate,1]{label=(\arabic*)}
\setlist[itemize,1]{label=--}
\setlist[itemize,2]{label=--}
\setlist[itemize,3]{label=--}
\setlist[itemize,4]{label=--}
\theoremstyle{definition}
\newtheorem{definition}{Definition}
\newtheorem*{claim}{Claim}
\newtheoremstyle{named}
	{\topsep}					
	{\topsep}					
	{}							
	{0pt}						
	{\bfseries}					
	{}							
	{5pt plus 1pt minus 1pt}	
	{\thmnote{#3}}				
\theoremstyle{named}
\newtheorem{namedthm}{}
\renewcommand{\qedsymbol}{$\blacksquare$}
\xpatchcmd{\proof}{\itshape}{\proofheadfont}{}{}
\newcommand{\proofheadfont}{\slshape}
\crefname{page}{p.}{pp.}
\crefname{equation}{equation}{equations}
\crefname{section}{section}{sections}
\crefname{subsection}{section}{sections}
\crefname{subsubsection}{section}{sections}
\crefname{appsec}{appendix}{appendices}
\crefname{supplsec}{supplemental appendix}{supplemental appendices}
\crefname{footnote}{footnote}{footnotes}
\crefname{figure}{figure}{figures}
\crefname{table}{table}{tables}
\crefname{theorem}{theorem}{theorems}
\crefname{proposition}{proposition}{propositions}
\crefname{lemma}{lemma}{lemmata}
\crefname{corollary}{corollary}{corollaries}
\crefname{remark}{remark}{remarks}
\crefname{observation}{observation}{observations}
\crefname{example}{example}{examples}
\crefname{fact}{fact}{facts}
\crefname{definition}{definition}{definitions}
\crefname{assumption}{assumption}{assumptions}
\crefname{exercise}{exercise}{exercises}
\crefname{notation}{notation}{notation}
\crefname{claim}{claim}{claims}
\crefname{conjecture}{conjecture}{conjectures}
\newcommand{\eps}{\varepsilon}
\newcommand{\dd}{\mathrm{d}}
\newcommand{\DD}{\mathrm{D}}
\newcommand{\E}{\mathbf{E}}
\newcommand{\R}{\mathbf{R}}
\newcommand{\N}{\mathbf{N}}
\newcommand{\1}{\boldsymbol{1}}
\newcommand{\join}{\vee}
\newcommand{\meet}{\wedge}
\newcommand{\Union}{\bigcup}
\DeclarePairedDelimiter\abs{\lvert}{\rvert}
\DeclarePairedDelimiter\floor{\lfloor}{\rfloor}
\newcommand*{\xslant}[2][76]{%
	\begingroup
	\sbox0{#2}%
	\pgfmathsetlengthmacro\wdslant{\the\wd0 + cos(#1)*\the\wd0}%
	\leavevmode
	\hbox to \wdslant{\hss
		\tikz[
			baseline=(X.base),
			inner sep=0pt,
			transform canvas={xslant=cos(#1)},
		] \node (X) {\usebox0};%
		\hss
		\vrule width 0pt height\ht0 depth\dp0 %
	}%
	\endgroup
}
\newcommand*{\xslantmath}{}
\def\xslantmath#1#{%
	\@xslantmath{#1}%
}
\newcommand*{\@xslantmath}[2]{%
	\ensuremath{%
		\mathpalette{\@@xslantmath{#1}}{#2}%
	}%
}
\newcommand*{\@@xslantmath}[3]{%
	\xslant#1{$#2#3\m@th$}%
}
\def\namedlabel#1#2{\begingroup
	#2%
	\def\@currentlabel{#2}%
	\phantomsection\label{#1}\endgroup
}
\let\save@mathaccent\mathaccent
\newcommand*\if@single[3]{%
	\setbox0\hbox{${\mathaccent"0362{#1}}^H$}%
	\setbox2\hbox{${\mathaccent"0362{\kern0pt#1}}^H$}%
	\ifdim\ht0=\ht2 #3\else #2\fi
	}
\newcommand*\rel@kern[1]{\kern#1\dimexpr\macc@kerna}
\newcommand*\widebar[1]{\@ifnextchar^{{\wide@bar{#1}{0}}}{\wide@bar{#1}{1}}}
\newcommand*\wide@bar[2]{\if@single{#1}{\wide@bar@{#1}{#2}{1}}{\wide@bar@{#1}{#2}{2}}}
\newcommand*\wide@bar@[3]{%
	\begingroup
	\def\mathaccent##1##2{%
	  \let\mathaccent\save@mathaccent
	  \if#32 \let\macc@nucleus\first@char \fi
	  \setbox\z@\hbox{$\macc@style{\macc@nucleus}_{}$}%
	  \setbox\tw@\hbox{$\macc@style{\macc@nucleus}{}_{}$}%
	  \dimen@\wd\tw@
	  \advance\dimen@-\wd\z@
	  \divide\dimen@ 3
	  \@tempdima\wd\tw@
	  \advance\@tempdima-\scriptspace
	  \divide\@tempdima 10
	  \advance\dimen@-\@tempdima
	  \ifdim\dimen@>\z@ \dimen@0pt\fi
	  \rel@kern{0.6}\kern-\dimen@
	  \if#31
	    \overline{\rel@kern{-0.6}\kern\dimen@\macc@nucleus\rel@kern{0.4}\kern\dimen@}%
	    \advance\dimen@0.4\dimexpr\macc@kerna
	    \let\final@kern#2%
	    \ifdim\dimen@<\z@ \let\final@kern1\fi
	    \if\final@kern1 \kern-\dimen@\fi
	  \else
	    \overline{\rel@kern{-0.6}\kern\dimen@#1}%
	  \fi
	}%
	\macc@depth\@ne
	\let\math@bgroup\@empty \let\math@egroup\macc@set@skewchar
	\mathsurround\z@ \frozen@everymath{\mathgroup\macc@group\relax}%
	\macc@set@skewchar\relax
	\let\mathaccentV\macc@nested@a
	\if#31
	  \macc@nested@a\relax111{#1}%
	\else
	  \def\gobble@till@marker##1\endmarker{}%
	  \futurelet\first@char\gobble@till@marker#1\endmarker
	  \ifcat\noexpand\first@char A\else
	    \def\first@char{}%
	  \fi
	  \macc@nested@a\relax111{\first@char}%
	\fi
	\endgroup
}
\def\@seccntformat#1{\@ifundefined{#1@cntformat}%
	{\csname the#1\endcsname\quad}
	{\csname #1@cntformat\endcsname}}
\newcommand\section@cntformat{}
\title{\scshape Screening for breakthroughs:\\
Omitted proofs}
\author{%
Gregorio Curello \\
University of Mannheim
\and
Ludvig Sinander \\
University of Oxford}
\date{13 March 2024}
\begin{document}

\maketitle


\begin{abstract}
	This document contains all proofs omitted
	from our working paper `Screening for breakthroughs'
	\parencite{sfb};
	specifically, the March 2025 version of the paper ({\color{blue}\href{http://www.arxiv.org/abs/2011.10090v9}{version 9 on arXiv}}).
\end{abstract}

\section*{\Cref*{sec:pf:model} (p. \pageref*{sec:pf:model})}
\label{omit:pf:model}
\addcontentsline{toc}{section}{Section \ref*{sec:pf:model} (p. \pageref*{sec:pf:model})}

The two utility possibility frontiers may be written as
\begin{align*}
	F^0(u)
	&= u - \lambda \phi^{-1}(u) 
	\\
	\text{and} \quad
	F^1(u)
	&= u + \lambda \max_{L \geq 0} \left\{
	w L - \phi^{-1} \left( u + \kappa(L) \right)
	\right\} .
\end{align*}

\begin{proof}[Proof of \Cref*{lemma:UI_assns}]
	Clearly $F^0,F^1$ are well-defined and continuous on $\left[ 0, \infty \right)$.
	It remains to show that they are strictly concave with peaks $u^0,u^1$ satisfying $u^1 < u^0$, and that the gap $F^1 - F^0$ is strictly decreasing.

	\begin{namedthm}[Claim.]
		\label{claim:Lstar_interior}
		For each $u>0$, the maximisation problem in the expression for $F^1(u)$
		has a unique solution $L^\star(u) \in (0,\infty)$.
		Furthermore, $\lim_{u \to \infty} L^\star(u) = 0$.
	\end{namedthm}

	\begin{proof}[Proof]%
		\renewcommand{\qedsymbol}{$\square$}
		Fix $u>0$, and write $f : \R_+ \to \R$ for the objective. Its (right-hand) derivative is
		\begin{equation*}
			f'(L)
			= w - \frac{ \kappa'(L) }
			{ \phi'\left( \phi^{-1} \left( u + \kappa(L) \right) \right) } .
		\end{equation*}
		Clearly $f'$ is strictly decreasing, so there is at most one maximiser.
		We have $f'(0) = w > 0$, whereas $f'(L) < 0$ for any large enough $L>0$ since
		\begin{equation*}
			\lim_{L \to \infty} f'(L)
			\leq w - \lim_{L \to \infty} \frac{ \kappa'(L) }
			{ \phi'\left( \phi^{-1} \left( \kappa(L) \right) \right) }
			= - \infty ,
		\end{equation*}
		where the equality holds since
		the numerator is bounded away from zero as $L \to \infty$
		while the denominator vanishes.
		Thus the unique maximiser $L^\star(u)$ of $f$ is interior, and therefore satisfies the first-order condition $f'\left( L^\star(u) \right) = 0$.
		As $u \to \infty$ in the first-order condition, the denominator in the fraction vanishes since $\lim_{C \to \infty} \phi'(C) = 0$,
		requiring the numerator also to vanish,
		which since $\kappa' > 0$ on $(0,\infty)$ demands that $L^\star(u) \to 0$.
	\end{proof}%
	\renewcommand{\qedsymbol}{$\blacksquare$}

	$F^0$ is strictly concave since $\phi$ is.
	Its (right-hand) derivative
	\begin{equation*}
		F^{0\prime}(u)
		= 1 - \frac{ \lambda }{ \phi'\left( \phi^{-1}(u) \right) }
	\end{equation*}
	is strictly positive at $u=0$ since $\lim_{C \to 0} \phi'(C) = \infty$,
	and is strictly negative for $u$ large enough since
	$\phi^{-1}(u) \to \infty$ as $u \to \infty$
	and $\lim_{C \to \infty} \phi'(C) = 0$.
	Thus $F^0$ is uniquely maximised at $u^0 \in (0,\infty)$ satisfying the first-order condition
	\begin{equation*}
		\phi'\left( \phi^{-1}\left( u^0 \right) \right) = \lambda .
	\end{equation*}

	$F^1$ is also strictly concave: for $u \neq u'$ in $[0,\infty)$ and $\eta \in (0,1)$,
	we have
	\begin{multline*}
		F^1( \eta u + (1-\eta) u' )
		\\
		\begin{aligned}
			&= \eta u + (1-\eta) u'
			+ \lambda \max_{L \geq 0} \left\{
			w L - \phi^{-1} \left( \eta u + (1-\eta) u' + \kappa(L) \right)
			\right\}
			\\
			&> \eta u + (1-\eta) u'
			\\
			&\quad
			+ \lambda \max_{L \geq 0} \left\{
			w L
			- \eta \phi^{-1} \left( u + \kappa(L) \right)
			- (1-\eta) \phi^{-1} \left( u' + \kappa(L) \right)
			\right\}
			\\
			&\geq \eta \left[ u
			+ \lambda \max_{L \geq 0} \left\{
			w L
			- \phi^{-1} \left( u + \kappa(L) \right)
			\right\}
			\right]
			\\
			&\quad
			+ (1-\eta) \left[ u'
			+ \lambda \max_{L \geq 0} \left\{
			w L
			- \phi^{-1} \left( u' + \kappa(L) \right)
			\right\}
			\right]
			\\
			&= \eta F^1(u) + (1-\eta) F^1(u') ,
		\end{aligned}
	\end{multline*}
	where the strict inequality holds since $-\phi^{-1}$ is strictly concave.
	By the envelope theorem, we have
	\begin{equation*}
		F^{1\prime}(u)
		= 1 - \frac{ \lambda }
		{ \phi'\left( \phi^{-1}\left(
		u + \kappa\left( L^\star(u) \right)
		\right) \right) } .
	\end{equation*}
	This expression is strictly negative for $u$ large enough
	because the denominator in the fraction vanishes as $u \to \infty$
	since $L^\star(u) \to 0$ by the \hyperref[claim:Lstar_interior]{claim}.
	Thus $F^1$ has a unique maximiser $u^1 \in [0,\infty)$,
	which satisfies the first-order condition
	\begin{equation*}
		\phi'\left( \phi^{-1} \left(
		u^1 + \kappa\left( L^\star\left(u^1\right) \right)
		\right) \right)
		\leq \lambda ,
		\quad \text{with equality if $u^1>0$.}
	\end{equation*}

	To show that $u^0 > u^1$, consider two cases.
	If $u^1 = 0$, then $u^0 > 0 = u^1$.
	If instead $u^1 > 0$, then the first-order conditions for $u^0$ and $u^1$ together yield
	\begin{equation*}
		\phi'\left( \phi^{-1}\left( u^0 \right) \right)
		= \lambda
		= \phi'\left( \phi^{-1} \left(
		u^1 + \kappa\left( L^\star\left(u^1\right) \right)
		\right) \right) .
	\end{equation*}
	Thus
	\begin{equation*}
		u^0 = u^1 + \kappa\left( L^\star\left(u^1\right) \right) > u^1 
	\end{equation*}
	since $L^\star\left( u^1 \right) > 0$ by the \hyperref[claim:Lstar_interior]{claim}
	and $\kappa > 0$ on $(0,\infty)$.

	It remains only to show that $F^1 - F^0$ is strictly decreasing, so that $u^\star = 0$.
	It suffices to show that $F^{1\prime} < F^{0\prime}$ on $(0,\infty)$.
	So fix any $u>0$.
	Since $L^\star(u) > 0$ by the \hyperref[claim:Lstar_interior]{claim},
	and $\kappa > 0$ on $(0,\infty)$, we have
	\begin{equation*}
		F^{0\prime}(u)
		= 1 - \frac{ \lambda }{ \phi'\left( \phi^{-1}(u) \right) }
		>  1 - \frac{ \lambda }
		{ \phi'\left( \phi^{-1}\left(
		u + \kappa\left( L^\star(u) \right)
		\right) \right) }
		= F^{1\prime}(u) .
		\qedhere
	\end{equation*}	
\end{proof}

\section*{\Cref*{suppl:ext:drop_F1-F0_strict} (p. \pageref*{suppl:ext:drop_F1-F0_strict})}
\label{omit:ext:drop_F1-F0_strict}
\addcontentsline{toc}{section}{Supplemental appendix \ref*{suppl:ext:drop_F1-F0_strict} (p. \pageref*{suppl:ext:drop_F1-F0_strict})}

To formalise and prove the trichotomy asserted in \cref*{footnote:trichotomy} (p. \pageref*{footnote:trichotomy}),
consider frontiers $F^0,F^1$ satisfying all of our model assumptions,
except for the requirement that $u^\star$ be a strict local maximum of $F^1-F^0$.

\begin{definition}
	\label{definition:saddle}
	$\widebar{u} \in (0,\infty)$ is a \emph{saddle point} of a continuous function $\psi$ defined on $[0,\infty)$ iff both
	\begin{enumerate}[label=(\alph*)]

		\item \label{bullet:saddle:deriv}
		for all $\eps > 0$, there are $u,u' \in (0,\infty)$ with $\widebar{u}-\eps < u < \widebar{u} < u' < \widebar{u}+\eps$ such that $\abs*{ \psi(u')-\psi(u) } / (u'-u) < \eps$,%
			\footnote{In case $\psi$ is differentiable at $\widebar{u}$,
			condition \ref{bullet:saddle:deriv} reduces to the requirement that $\psi'\left(\widebar{u}\right)=0$.
			Our definition thus extends the usual definition of a saddle point, which applies only to differentiable functions.}
		and

		\item \label{bullet:saddle:maxmin}
		$\widebar{u}$ is neither a local maximum nor a local minimum of $\psi$.

	\end{enumerate}
\end{definition}

\begin{namedthm}[Claim.]
	If $u^\star>0$, then either
	(i) $u^\star$ is a local maximum of $F^1-F^0$,
	(ii) $u^\star$ is a saddle point of $F^1-F^0$, or
	(iii) both $F^0$ and $F^1$ are kinked at $u^\star$.
\end{namedthm}

Assuming that $u^\star$ is a strict local maximum of $F^1-F^0$
thus amounts to ruling out the saddle-point case (ii),
the (pathological) mutual-kink case (iii),
and the case in which $F^1-F^0$ is constant on a left-neighbourhood of $u^\star$.

\begin{proof}
	Assume that $u^\star>0$ and that neither (i) nor (ii) holds;
	we shall deduce (iii).
	By its definition, $u^\star$ cannot be a local minimum of $F^1-F^0$.
	Then by definition of `saddle point',
	there must be an $\eps>0$ such that
	\begin{equation*}
		\frac{ \abs*{ \left(F^1-F^0\right)(u') - \left(F^1-F^0\right)(u) } }
		{u'-u}
		\geq \eps
	\end{equation*}
	for all $u,u' \in (0,\infty)$ satisfying $u^\star-\eps < u < u^\star < u' < u^\star+\eps$.
	Since $F^1-F^0$ is strictly decreasing on $\left[u^\star,u^0\right]$,
	the expression in the numerator must be strictly negative for $u$ sufficiently close to $u^\star$,
	and thus $F^{1+}(u^\star) \leq F^{0+}(u^\star) - \eps < F^{0+}(u^\star)$ and (similarly) $F^{1-}(u^\star) < F^{0-}(u^\star)$.
	By definition of $u^\star$, the frontiers $F^0,F^1$ share a supergradient $\eta \in \R$ at $u^\star$,
	so that $F^{j+}(u^\star) \leq \eta \leq F^{j-}(u^\star)$ for both $j \in \{0,1\}$.
	Then
	\begin{equation*}
		F^{1+}(u^\star) < F^{0+}(u^\star)
		\leq \eta
		\leq F^{1-}(u^\star) < F^{0-}(u^\star) ,
	\end{equation*}
	implying that both $F^0$ and $F^1$ are kinked at $u^\star$.
\end{proof}

\section*{\Cref*{suppl:ext:random_F1} (p. \pageref*{suppl:ext:random_F1})}
\label{omit:ext:random_F1}
\addcontentsline{toc}{section}{Supplemental appendix \ref*{suppl:ext:random_F1} (p. \pageref*{suppl:ext:random_F1})}

For $u \in [0,\infty)$,
write $M_u$ for the space of all maps $\widehat{X} : \mathcal{F} \to [0,\infty)$ that satisfy
$\E\bigl( \widehat{X}(\boldsymbol{F}) \bigr) = u$.

\begin{proof}[Proof of \Cref*{lemma:randomF1_assns}]
	For concavity, take $u,u^\dag \in [0,\infty)$ and $\lambda \in (0,1)$.
	Let $\widehat{X} \in M_u$ and $\widehat{X}^\dag \in M_{u^\dag}$ be maximisers:
	\begin{equation*}
		\E\left( \boldsymbol{F}\left( \widehat{X}( \boldsymbol{F} ) \right) \right)
		= F^1(u)
		\quad \text{and} \quad
		\E\left( \boldsymbol{F}\left( \widehat{X}^\dag( \boldsymbol{F} ) \right) \right)
		= F^1\left( u^\dag \right) .
	\end{equation*}
	Then $\lambda \widehat{X} + (1-\lambda) \widehat{X}^\dag$ belongs to $M_{\lambda u + (1-\lambda) u^\dag}$, so that
	\begin{multline*}
		F^1\left( \lambda u + (1-\lambda) u^\dag \right)
		\\
		\begin{aligned}
			&\geq \E\left(
			\boldsymbol{F}\left(
			\lambda \widehat{X}( \boldsymbol{F} )
			+ (1-\lambda) \widehat{X}^\dag( \boldsymbol{F} )
			\right)
			\right)
			&&
			\\
			&\geq \E\left(
			\lambda \boldsymbol{F}\left( \widehat{X}( \boldsymbol{F} ) \right)
			+ (1-\lambda) \boldsymbol{F}\left( \widehat{X}^\dag( \boldsymbol{F} )\ \right)
			\right)
			&& \text{since $\boldsymbol{F}$ is concave a.s.}
			\\
			&= \lambda F^1(u) + (1-\lambda) F^1\left( u^\dag \right) .
			&&
		\end{aligned}
	\end{multline*}

	To see that $F^1$ attains a maximum at $u^1$, observe that $F^1$ is bounded above by $\E\left( \boldsymbol{F}\left( U^1( \boldsymbol{F} ) \right) \right)$,
	and that it attains this value at $u^1$.
	The peak is unique because for $u \neq u^1$, any $\widehat{X} \in M_u$ must have $\widehat{X} \neq U^1$ on a non-null set, and thus
	\begin{equation*}
		\boldsymbol{F}\left( \widehat{X}(\boldsymbol{F}) \right)
		\leq \mathrel{(<)} \boldsymbol{F}\left( U^1(\boldsymbol{F}) \right)
		\quad \text{a.s. (with positive probability).}
	\end{equation*}

	For upper semi-continuity, observe that since $F^1$ is concave,
	it is continuous on the interior of its effective domain, which is a convex set that contains $u^1$.
	$F^1$ is trivially continuous off the closure of its effective domain, where it is constant and equal to $-\infty$.
	It remains only to establish that
	\begin{equation*}
		\limsup_{u' \to u} F^1(u') \leq F^1(u)
	\end{equation*}
	for $u$ on the boundary of the effective domain (there are at most two such $u$s).
	It suffices to show for an arbitrary decreasing sequence $(u_n)_{n \in \N}$ in the interior of the effective domain converging to some $u \in \left[ 0, u^1 \right]$ that
	\begin{equation*}
		\lim_{n \to \infty} F^1(u_n) \leq F^1(u)
	\end{equation*}
	(where the limit exists since $\left( F^1(u_n) \right)_{n \in \N}$ is eventually monotone),
	and similarly for increasing sequences converging to $u \geq u^1$.
	We show the former, omitting the analogous argument for the latter.

	For each $n \in \N$, let $\widehat{X}_n \in M_{u_n}$ be a maximiser at $u_n$:
	\begin{equation*}
		\E\left( \boldsymbol{F}\left( \widehat{X}_n( \boldsymbol{F} ) \right) \right)
		= F^1(u_n) .
	\end{equation*}
	Since $\Union_{ u' \in \left[ 0, u^1 \right] } M_{u'}$ is compact (because bounded),
	the sequence $\bigl( \widehat{X}_n \bigr)_{n \in \N}$ admits a convergent subsequence $\bigl( \widehat{X}_{n_k} \bigr)_{k \in \N}$, whose limit we denote by $\widehat{X}$.
	We have
	\begin{align*}
		F^1(u)
		&\geq \E\left( \boldsymbol{F}\left( \widehat{X}( \boldsymbol{F} ) \right) \right)
		&& \text{since $\widehat{X} \in M_u$}
		\\
		&\geq \E\left( \lim_{k \to \infty} \boldsymbol{F}\left( \widehat{X}_{n_k}( \boldsymbol{F} ) \right) \right)
		&& \text{since $\boldsymbol{F}$ is upper semi-continuous a.s.}
		\\
		&= \lim_{k \to \infty}
		\E\left( \boldsymbol{F}\left( \widehat{X}_{n_k}( \boldsymbol{F} ) \right) \right)
		&& \text{by bounded convergence}
		\\
		&= \lim_{k \to \infty} F^1\left( u_{n_k} \right)
		&&
		\\
		&= \lim_{n \to \infty} F^1\left( u_n \right)
		&& \text{since $\left( F^1(u_n) \right)_{n \in \N}$ is convergent.}
		\qedhere
	\end{align*}	
\end{proof}

\section*{\Cref*{suppl:Euler_lemma_pf} (p. \pageref*{suppl:Euler_lemma_pf})}
\label{omit:Euler_lemma_pf}
\addcontentsline{toc}{section}{Supplemental appendix \ref*{suppl:Euler_lemma_pf} (p. \pageref*{suppl:Euler_lemma_pf})}

To prove \Cref*{lemma:euler_integrable_supergradient} and the Gateaux lemma, we shall use the following standard integration-by-parts result:%
	\footnote{See e.g. Theorem 18.4 in \textcite[p. 236]{Billingsley1995}.}

\begin{namedthm}[IBP lemma.]
	\label{lemma:int_parts}
	Let $\nu$ be a finite measure on $\R_+$, and let $L$ be a $\nu$-integrable function $\R_+ \to \R$ satisfying $L(t) = L(0) + \int_0^t l$ for some (Lebesgue-)integrable $l : \R_+ \to \R$.
	Then
	\begin{equation*}
		\int_{[0,T]} L \dd \nu = L(T) \nu([0,T]) - \int_0^T \nu([0,t]) l(t) \dd t
		\quad \text{for every $T \in \R_+$.}
	\end{equation*}
\end{namedthm}

\subsection{Proof of Lemma \ref*{lemma:euler_integrable_supergradient} (p. \pageref*{lemma:euler_integrable_supergradient})}
\label{omit:Euler_lemma_pf:euler_integrable_supergradient}

Define $\Phi : \R_+ \to [0,\infty]$ by
\begin{equation*}
	\Phi(t) \coloneqq r \int_0^t e^{-rs} \phi^0(s) \dd s .
\end{equation*}
We first show that $\E_G(\Phi(\tau))<\infty$,
so that $\Phi$ is $G$-integrable.
Note that
\begin{align*}
	\E_G\left( r \int_0^\tau \frac{e^{-rt}}{1-G(t)} \dd t \right)
	&= \lim_{T \to \infty} \int_{[0,T]}
	\left( r \int_0^s \frac{e^{-rt}}{1-G(t)} \dd t \right)
	G(\dd s)
	\\
	&= \lim_{T \to \infty} r \int_0^T e^{-rt}\frac{G(T)-G(t)}{1-G(t)}\dd t
	= r \int_0^\infty e^{-rt} \dd t
	= 1 ,
\end{align*}
where the first and third equalities hold by monotone convergence, and the second follows from the \hyperref[lemma:int_parts]{IBP lemma}
with $\nu$ the measure associated with $G$
and
\begin{equation*}
	l(t) \coloneqq
	\begin{cases}
		r e^{-rt} / [1-G(t)]	& \text{for $t < T$} \\
		0						& \text{for $t \geq T$}
	\end{cases}
	\qquad \text{if $G(T)<1$}
\end{equation*}
and $l(t) \coloneqq r e^{-rt}$ if $G(T)=1$.
Thus since $(x,X)$ satisfies the Euler equation with $\phi^0,\phi^1$,
(\ref*{eq:euler}) yields
\begin{align*}
	\E_G\left( \Phi(\tau) \right)
	&= \E_G\left( r \int_0^\tau e^{-rt}
	\frac{-\int_{[0,t]}\phi^1 \dd G}{1-G(t)} \dd t\right)
	\\
	&\leq \E_G\left( r \int_0^\tau e^{-rt}
	\frac{\int_{\R_+}\abs*{\phi^1} \dd G}{1-G(t)} \dd t\right)
	= \E_G\left(\abs*{\phi^1(\tau)}\right) 
	< \infty ,
\end{align*}
where the final inequality holds since $\phi^1$ is $G$-integrable.

To show that $x$ belongs to $\mathcal{X}_G$,
fix a $u \in \left(0,u^0\right)$;
we shall prove that the maps $\psi^0_{x,u}$ and $\psi^1_{X,u}$ (defined on p. \pageref*{eq:euler_psi01_defn}) are $G$-integrable.
For the former, let $\mathcal{T}$ be the set of $t \in \R_+$ at which $x_t \geq u$.
The map $t \mapsto F^{0\prime}(x_t,u)$ is bounded on $\mathcal{T}$
(by zero below and by $F^{0+}(u)$ above) since $x \leq u^0$ and $F^0$ is increasing and concave,
so it suffices to show that
\begin{equation*}
	\varphi(t)
	\coloneqq r \int_0^t e^{-rs}
	F^{0\prime}\left(x_s,u\right) \1_{\R_+ \setminus \mathcal{T}}(s) \dd s
\end{equation*}
is $G$-integrable.
Since $(x,X)$ satisfies the Euler equation, $\phi^0(t)$ is a supergradient of $F^0$ at $x_t$ for a.e. $t \in \R_+$ such that $G(t) < 1$.
Thus since $x < u$ on $\R_+ \setminus \mathcal{T}$ and $F^0$ is concave, we have
\begin{equation*}
	\phi^0(t)
	\geq F^{0+}(x_t)
	= F^{0\prime}(x_t,u)
	\quad \text{for a.e. $t \in \R_+ \setminus \mathcal{T}$ with $G(t)<1$,}
\end{equation*}
so that
\begin{equation*}
	\varphi(t)
	\leq r \int_0^t e^{-rs} \phi^0(s) \1_{\R_+ \setminus \mathcal{T}}(s) \dd s
	\leq \Phi(t)
	\quad \text{for all $t \in \R_+$ with $G(t)<1$.}
\end{equation*}
Since $\varphi$ and $\Phi$ are continuous
and the former is non-negative,
it follows that $0 \leq \varphi \leq \Phi$ on the support of $G$,
so that $\varphi$ is $G$-integrable since $\Phi$ is.

It remains to show that $\psi^1_{X,u}$ is $G$-integrable.
Choose $\eps \in \left(0,u \meet \left[u^0 - u \right]\right)$ so that $\eps < u^1 \meet \left(u^0 - u^1 \right)$ if $u^1 > 0$,
and let
\begin{equation*}
	\mathcal{T}'
	\coloneqq
	\begin{cases}
		\left\{t \in \R_+ : X_t < u + \eps\right\}
		& \text{if $u^1 = 0$}
		\\
		\left\{t \in \R_+ : \left(u \meet u^1\right)-\eps < X_t < \left(u \join u^1\right) + \eps\right\}
		& \text{if $u^1 > 0$.}
	\end{cases}
\end{equation*}
We will show that $t \mapsto \psi^1_{X,u}(t) = e^{-rt} F^{1\prime}(X_t,u)$
is bounded on $\mathcal{T}'$
and $G$-integrable on $\R_+ \setminus \mathcal{T}'$.
The former follows from the fact that
\begin{equation*}
	\text{$F^1$ is Lipschitz continuous on}\quad
	\begin{cases}
		\left[0,u+\eps\right]
		& \text{if $u^1=0$} \\
		\left[ \left(u \meet u^1\right)-\eps,
		\left(u \join u^1\right) + \eps\right]
		& \text{if $u^1>0$,}
	\end{cases}
\end{equation*}
so that $t \mapsto F^{1\prime}(X_t,u)$ is bounded on $\mathcal{T}'$.
For the latter,
note that (by definition of $\mathcal{T}'$,)
every $t \in \R_+ \setminus \mathcal{T}'$ has
either $X_t < u, u^1$ or $X_t > u,u^1$.
Since $(x,X)$ satisfies the Euler equation, $\phi^1(t)$ is a supergradient of $F^1$ at $X_t$ for $G$-a.e. $t \in \R_+$.
Thus for $G$-a.e. $t \in \R_+ \setminus \mathcal{T}'$, we have
\begin{align*}
	\text{either}\quad
	F^{1\prime}(X_t,u) = F^{1+}(X_t)
	\quad &\text{and} \quad
	0 < F^{1+}(X_t) \leq \phi^1(t)
	\\
	\text{or}\quad
	F^{1\prime}(X_t,u) = F^{1-}(X_t)
	\quad &\text{and} \quad
	0 > F^{1-}(X_t) \geq \phi^1(t) 
\end{align*}
by the concavity of $F^0$,
so that
\begin{equation*}
	\abs*{\psi^1_{X,u}(t)}
	\leq \abs*{F^{1\prime}(X_t,u)}
	\leq \abs*{\phi^1(t)}
	\quad \text{for $G$-a.e. $t \in \R_+ \setminus \mathcal{T}'$,}
\end{equation*}
which implies that $\psi^1_{X,u}$ is $G$-integrable on $\R_+ \setminus \mathcal {T}'$
since $\phi^1$ is.
\qed

\subsection{Proof of the claim in the proof of Lemma \ref*{lemma:integrable_supergradient} (p. \pageref*{claim:integrable_supergradient})}
\label{omit:Euler_lemma_pf:integrable_supergradient}

\renewcommand{\qedsymbol}{$\square$}
For $\alpha \in (0,1)$, define $h^0_\alpha, h^1_\alpha : \R_+ \to (-\infty,\infty]$ by
\begin{align*}
	h^0_\alpha(t)
	&\coloneqq r e^{-rt}
	\frac{ F^0\left( x_t+\alpha \left[ u - x_t \right] \right)
	- F^0( x_t ) }
	{ \alpha }
	\\
	\text{and}\quad
	h^1_\alpha(t)
	&\coloneqq e^{-rt}
	\frac{ F^1\left( X_t + \alpha \left[u-X_t\right] \right) - F^1(X_t) }
	{\alpha} .
\end{align*}
Clearly both are measurable for each $\alpha \in (0,1)$.
By definition, we have
\begin{equation*}
	\DD \pi_G\bigl( x, x^\dag-x \bigr)
	= \lim_{\alpha \downarrow 0} \left[
	\E_G\left( \int_0^\tau h^0_\alpha \right)
	+ \E_G\left( h^1_\alpha(\tau) \right)
	\right] ;
\end{equation*}
we must show that the limit exists and satisfies the asserted inequality.

As $\alpha \downarrow 0$,
$h^0_\alpha$ and $h^1_\alpha$
converge pointwise
to
$h^0 : \R_+ \to (-\infty,\infty]$
and $h^1 : \R_+ \to [-\infty,\infty]$, respectively, given by
\begin{equation*}
	h^0(t)
	\coloneqq r e^{-rt} F^{0\prime}\left( x_t, u \right)
	\left( u - x_t \right)
	\quad \text{and} \quad
	h^1(t)
	\coloneqq e^{-rt} F^{1\prime}\left( X_t, u \right)
	\left( u - X_t \right) .
\end{equation*}
We shall show that
\begin{enumerate}[label=(\alph*)]

	\item \label{bullet:integrable_supergradient_a}
	$t \mapsto \int_0^t h^0 \1_{\mathcal{T}}$
	and $h^1 \1_{\mathcal{T}'}$ are $G$-integrable,

	\item \label{bullet:integrable_supergradient_b}
	$\E_G\left( \int_0^\tau h^0 \right)$ and $\E_G\left( h^1(\tau) \right)$ exist and are $>-\infty$, and

	\item \label{bullet:integrable_supergradient_c}
	$\E_G\left( \int_0^\tau h^0_\alpha \right)
	\to \E_G\left( \int_0^\tau h^0 \right)$
	and $\E_G\left( h^1_\alpha(\tau) \right)
	\to \E_G\left( h^1(\tau) \right)$
	as $\alpha \downarrow 0$.

\end{enumerate}
This suffices since then
$C \coloneqq
  \E_G\left( \int_0^\tau h^0 \1_{\mathcal{T}} \right)
+ \E_G\left( h^1(\tau) \1_{\mathcal{T}'}(\tau) \right)$ is finite,
\begin{align*}
	\DD \pi_G\bigl( x, x^\dag-x \bigr)
	&= \E_G\left( \int_0^\tau h^0 \right)
	+ \E_G\left( h^1(\tau) \right)
	\\
	&= \E_G\left( \int_0^\tau h^0 \1_{\R_+ \setminus \mathcal{T}} \right)
	+ \E_G\left( h^1(\tau) \1_{\R_+ \setminus \mathcal{T}'}(\tau) \right)
	+ C ,
\end{align*}
and moreover any $t \in \R_+ \setminus \mathcal{T}$ has
(by definition of $\mathcal{T}$)
\begin{equation*}
	u-x_t \geq \eps
	\quad \text{and thus} \quad
	h^0(t) \geq r e^{-rt} F^{0\prime}(x_t,u) \eps
	= r e^{-rt} \abs*{ F^{0\prime}(x_t,u) } \eps ,
\end{equation*}
while any $t \in \R_+ \setminus \mathcal{T}'$ has
\begin{align*}
	\text{either}\quad
	u-X_t \geq \eps'
	\quad &\text{and} \quad
	X_t < u^1
	\quad \text{(hence $F^{1\prime}\left(X_t,u\right) > 0$)}
	\\
	\text{or (if $u^1 > 0$)}\quad
	u-X_t \leq \eps'
	\quad &\text{and} \quad
	X_t > u^1
	\quad \text{(hence $F^{1\prime}\left(X_t,u\right) < 0$)} 
\end{align*}
by definition of $\mathcal{T}'$,
so that $h^1(t) \geq e^{-rt}\abs*{F^{1\prime}\left(X_t,u\right)} \eps'$.

To obtain \ref{bullet:integrable_supergradient_a}--\ref{bullet:integrable_supergradient_c} for $h^0$,
observe that $F^0$ is $K$-Lipschitz on $\left[ u-\eps, u^0 \right]$ for some $K>0$ since it is concave and increasing on $\left[0,u^0\right]$.
Thus
\begin{equation*}
	\abs*{ h^0_\alpha(t) }
	\leq r e^{-rt} K \abs*{ x_t - u }
	\leq r e^{-rt} K u^0
	\quad \text{for all $\alpha \in (0,1)$ and $t \in \mathcal{T}$,}
\end{equation*}
which implies that
\begin{equation*}
	\abs*{ \int_0^t h^0_\alpha \1_{\mathcal{T}} }
	\leq \int_0^t \abs*{ h^0_\alpha } \1_{\mathcal{T}}
	\leq K u^0 r \int_0^t e^{-rt} \1_{\mathcal{T}}
	\leq K u^0
	\quad \text{for all $\alpha \in (0,1)$.}
\end{equation*}
Thus applying the dominated convergence theorem twice yields
\begin{equation*}
	\lim_{\alpha \downarrow 0}
	\E_G\left( \int_0^\tau h^0_\alpha \1_{\mathcal{T}} \right)
	= \E_G\left(
	\lim_{\alpha \downarrow 0}
	\int_0^\tau h^0_\alpha \1_{\mathcal{T}} \right)
	= \E_G\left( \int_0^\tau h^0 \1_{\mathcal{T}} \right)
	\in \R ,
\end{equation*}
giving us property \ref{bullet:integrable_supergradient_a}.

Meanwhile,
$h^0_\alpha \1_{\R_+ \setminus \mathcal{T}}$
is (pointwise) decreasing in $\alpha$ since $F^0$ is concave,
and is non-negative since $F^0$ is increasing
on $\left[ 0, u^0 \right]$ and $x \leq u < u^0$ on $\R_+ \setminus \mathcal{T}$.
Clearly the same is true of $t \mapsto \int_0^t h^0_\alpha \1_{\R_+ \setminus \mathcal{T}}$,
Thus applying the monotone convergence theorem twice yields
\begin{equation*}
	0
	\leq \lim_{\alpha \downarrow 0}
	\E_G\left( \int_0^\tau h^0_\alpha \1_{\R_+ \setminus \mathcal{T}} \right)
	= \E_G\left(
	\lim_{\alpha \downarrow 0}
	\int_0^\tau h^0_\alpha \1_{\R_+ \setminus \mathcal{T}} \right)
	= \E_G\left( \int_0^\tau h^0 \1_{\R_+ \setminus \mathcal{T}} \right) ,
\end{equation*}
where the rightmost term is well-defined.
Thus $h^0$ satisfies \ref{bullet:integrable_supergradient_b} and \ref{bullet:integrable_supergradient_c}.

To derive \ref{bullet:integrable_supergradient_a}--\ref{bullet:integrable_supergradient_c} for $h^1$,
observe that
\begin{equation*}
	\text{$F^1$ is Lipschitz continuous on}\quad
	\begin{cases}
		\left[0,u+\eps'\right]
		& \text{if $u^1=0$} \\
		\left[ \left(u \meet u^1\right)-\eps',
		\left(u \join u^1\right) + \eps'\right]
		& \text{if $u^1>0$}
	\end{cases}
\end{equation*}
since it is concave and maximised at $u^0$.
Hence the family $\smash{\left( h^1_\alpha \right)_{\alpha \in (0,1)}}$ is uniformly bounded on $\mathcal{T}'$, so that
\begin{equation}
	\lim_{\alpha \downarrow 0}
	\E_G\left( h^1_\alpha(\tau) \1_{\mathcal{T}'}(\tau) \right)
	= \E_G\left( h^1(\tau) \1_{\mathcal{T}'}(\tau) \right)
	\in \R
	\label{eq:integrable_supergradient_1T}
\end{equation}
by bounded convergence.
Thus $h^1$ satisfies property \ref{bullet:integrable_supergradient_a}.

Define $\widebar{\alpha} \coloneqq \eps' / \left( \eps' + \abs*{u^1-u} \right)$.
For any $t \in \R_+ \setminus \mathcal{T}'$ we have
either $X_t \leq \left( u \meet u^1 \right) - \eps' < u$
or (if $u^1>0$) $X_t \geq \left( u \join u^1 \right) + \eps' > u$.
Thus for any $\alpha \in \left( 0, \widebar{\alpha} \right)$,
either
\begin{equation*}
	X_t
	< (1-\alpha) X_t + \alpha u
	\leq (1-\alpha) \left( u \meet u^1 \right) - (1-\alpha) \eps' + \alpha u
	< u^1
\end{equation*}
(where the last inequality holds by $\alpha<\widebar{\alpha}$)
or (similarly)
\begin{equation*}
	X_t
	> (1-\alpha) X_t + \alpha u
	\geq (1-\alpha) \left( u \join u^1 \right) + (1-\alpha) \eps' + \alpha u
	> u^1 .
\end{equation*}
In short, $(1-\alpha) X_t + \alpha u$ lies between $X_t$ and $u^1$
for any $t \in \R_+ \setminus \mathcal{T}'$ and $\alpha \in \left( 0, \widebar{\alpha} \right)$,
which since $F^1$ is concave and maximised at $u^1$ implies that
$\left( 0, \widebar{\alpha} \right) \ni \alpha \mapsto h^1_\alpha \1_{\R_+ \setminus \mathcal{T}'}$ is (pointwise) decreasing and non-negative.
Hence
\begin{equation*}
	0
	\leq \lim_{\alpha \downarrow 0}
	\E_G\left( h^1_\alpha(\tau) \1_{\R_+ \setminus \mathcal{T}'}(\tau) \right)
	= \E_G\left( h^1(\tau) \1_{\R_+ \setminus \mathcal{T}'}(\tau) \right)
\end{equation*}
by the monotone convergence theorem,
where the right-hand side is well-defined.
This together with \eqref{eq:integrable_supergradient_1T}
implies that $h^1$ satisfies \ref{bullet:integrable_supergradient_b} and \ref{bullet:integrable_supergradient_c}.
\qed
\renewcommand{\qedsymbol}{$\blacksquare$}

\subsection{Proof of the Gateaux lemma (p. \pageref*{lemma:euler_gateaux})}
\label{omit:Euler_lemma_pf:euler_gateaux}

For $\alpha \in (0,1)$, define $h^0_\alpha, h^1_\alpha : \R_+ \to (-\infty,\infty]$:
\begin{align*}
	h^0_\alpha(t)
	&\coloneqq r e^{-rt}
	\frac{ F^0\left( x_t+\alpha \left[ x^\dag_t - x_t \right] \right)
	- F^0( x_t ) }
	{ \alpha }
	\\
	\text{and}\quad
	h^1_\alpha(t)
	&\coloneqq e^{-rt}
	\frac{ F^1\left( X_t + \alpha \left[X^\dag_t-X_t\right] \right) - F^1(X_t) }
	{\alpha} .
\end{align*}
Clearly both are measurable for each $\alpha \in (0,1)$.
We have
\begin{equation*}
	\DD \pi_G\bigl( x, x^\dag-x \bigr)
	= \lim_{\alpha \downarrow 0} \left[
	\E_G\left( \int_0^\tau h^0_\alpha \right)
	+ \E_G\left( h^1_\alpha(\tau) \right)
	\right]
\end{equation*}
by definition;
we must show that the limit exists and has the asserted value.

As $\alpha \downarrow 0$,
$h^0_\alpha$ and $h^1_\alpha$
converge pointwise
to
$h^0 : \R_+ \to (-\infty,\infty]$
and $h^1 : \R_+ \to [-\infty,\infty]$, respectively, given by
\begin{align*}
	h^0(t)
	&\coloneqq r e^{-rt} F^{0\prime}\left( x_t, x^\dag_t \right)
	\left( x^\dag_t - x_t \right)
	\\
	\text{and}\quad
	h^1(t)
	&\coloneqq e^{-rt} F^{1\prime}\left( X_t, X^\dag_t \right)
	\left( X^\dag_t - X_t \right) .
\end{align*}

\begin{namedthm}[Claim A.]
	\label{claim:gateaux_integral_convergence}
	$\DD\pi_G\bigl(x,x^\dag-x\bigr)$ exists
	and is finite, equal to
	$\E_G\left( \int_0^\tau h^0 \right)
	+ \E_G\left( h^1(\tau) \right)$.
\end{namedthm}

By \hyperref[claim:gateaux_integral_convergence]{claim A}, we have
\begin{multline*}
	\DD \pi_G\left( x, x^\dag-x \right)
	= \E_G\left( \int_0^\tau h^0 \right)
	+ \E_G\left(h^1(\tau)\right)
	\\
	\begin{aligned}
		&= \E_G\left( r \int_0^\tau e^{-rs} \phi^0(s)
		\left[ x_s^\dag - x_s \right] \dd s \right)
		+ \E_G\left( e^{-r\tau} \phi^1(\tau)
		\left[ X^\dag_\tau - X_\tau \right] \right)
		\\
		&\quad
		+ \E_G\left( r \int_0^\tau e^{-rt}
		\left[ F^{0\prime}\left( x_t, x^\dag_t \right) - \phi^0(t) \right]
		\left[ x_t - x^\dag_t \right] \dd t \right)
		\\
		&\quad
		+ \E_G\left( e^{-r\tau}
		\left[ F^{1\prime}\left( X_\tau, X^\dag_\tau \right) - \phi^1(\tau) \right]
		\left[ X^\dag_\tau - X_\tau \right] \right) ,
	\end{aligned}
\end{multline*}
where the last equality holds since $t \mapsto r \int_0^t e^{-rs}\phi^0(s) \dd s$ and $\phi^1$ are $G$-integrable
and $x,x^\dag,X,X^\dag$ are bounded.
The proof is completed by invoking the following claims to rewrite the first two terms in the desired form:

\begin{namedthm}[Claim B.]
	\label{claim:gateaux_algebra_b}
	It holds that
	\begin{equation*}
		\E_G\left( r \int_0^\tau e^{-rs}\phi^0(s)
		\left[ x_s^\dag - x_s \right] \dd s
		\right)
		= r \int_0^\infty e^{-rt} [1-G(t)]\phi^0(t)
		\left( x_t^\dag - x_t \right) \dd t .
	\end{equation*}
\end{namedthm}

\begin{namedthm}[Claim C.]
	\label{claim:gateaux_algebra_c}
	It holds that
	\begin{equation*}
		\E_G\left(
		e^{-r\tau}\phi^1(\tau) \left[ X^\dag_\tau - X_\tau \right]
		\right)
		= r \int_0^\infty e^{-rt}
		\left( \int_{[0,t]} \phi^1 \dd G \right)
		\left( x^\dag_t - x_t \right) \dd t .
	\end{equation*}
\end{namedthm}

\begin{proof}[Proof of {\hyperref[claim:gateaux_integral_convergence]{claim A}}]%
	\renewcommand{\qedsymbol}{$\square$}
	We must show that $t \mapsto \int_0^t h^0$ and $h^1$
	are $G$-integrable and
	\begin{equation*}
		\E_G\left( \int_0^\tau h^0_\alpha \right)
		\to \E_G\left( \int_0^\tau h^0 \right)
		\quad \text{and} \quad
		\E_G\left( h^1_\alpha(\tau) \right)
		\to \E_G\left( h^1(\tau) \right)
		\quad \text{as $\alpha \downarrow 0$.}
	\end{equation*}
	Fix a $u \in \left(0,u^0\right)$.

	For $h^0$, define $\varphi : \R_+ \to [0,\infty]$ by
	\begin{equation*}
		\varphi(t)
		\coloneqq r e^{-rt}
		\left[ F^{0\prime}(x_t,u) + F^{0\prime}\left(x_t^\dag,u\right) \right] .
	\end{equation*}
	Recall the definition of $\psi^0_{x,u},\psi^0_{x^\dag,u}$ (p. \pageref*{eq:euler_psi01_defn}).
	Since $x$ and $x^\dag$ belong to $\mathcal{X}_G$,
	the map
	\begin{equation*}
		t \mapsto \int_0^t \varphi = \psi^0_{x,u}(t) + \psi^0_{x^\dag,u}(t)
	\end{equation*}
	is $G$-integrable,
	and thus $\varphi$ (being non-negative) is Lebesgue-integrable on $(0,t)$ for any $t \in \R_+$ with $G(t)<1$.%
		\footnote{If $\varphi$ were not integrable on $(0,t)$,
		then $\smash{\psi^0_{x,u}(s) + \psi^0_{x^\dag,u}(s) = \int_0^s \varphi \geq \int_0^t \varphi = \infty}$ for every $s \geq t$,
		which if $G(t)<1$ implies
		$\smash{\E_G( \psi^0_{x,u}(\tau) + \psi^0_{x^\dag,u}(\tau) ) = \infty}$,
		a contradiction.}
	Since $F^0$ is concave, we have
	\begin{equation*}
		\abs*{ h^0_\alpha(t) }
		\leq \varphi(t) \abs*{ x_t - x^\dag_t }
		\leq \varphi(t) u^0
		\quad \text{for all $\alpha \in (0,1)$ and $t \in \R_+$,}
	\end{equation*}
	and thus
	\begin{equation*}
		\abs*{ \int_0^t h^0_\alpha }
		\leq \left[\psi^0_{x,u}(t) +\psi^0_{x^\dag,u}(t)\right] u^0
		\quad \text{for all $\alpha \in (0,1)$ and $t \in \R_+$.}
	\end{equation*}
	Thus applying the dominated convergence theorem twice yields
	\begin{equation*}
		\lim_{\alpha \downarrow 0}
		\E_G\left( \int_0^\tau h_\alpha^0 \right)
		= \E_G\left( \lim_{\alpha \downarrow 0}
		\int_0^\tau h_\alpha^0 \right)
		= \E_G\left(\int_0^\tau h^0 \right)
		\in \R .
	\end{equation*}

	Similarly, since $F^1$ is concave and $X,X^\dag$ take values in $\left[0,u^0\right]$, we have
	\begin{align*}
		\abs*{h^1_\alpha(t)}
		&\leq \left( \abs*{\psi^1_{X,u}(t)}
		+ \abs*{\psi^1_{X^\dag,u}(t)} \right)
		\abs*{X^\dag_t- X_t}
		\\
		&\leq \left(\abs*{\psi^1_{X,u}(t)}+\abs*{\psi^1_{X^\dag,u}(t)}\right) u^0
		&& \text{for all $\alpha \in (0,1)$ and $t \in \R_+$.}
	\end{align*}
	The right-hand side is $G$-integrable by since $x$ and $x^\dag$ belong to $\mathcal{X}_G$,
	so
	\begin{equation*}
		\lim_{\alpha \downarrow 0}
		\E_G\left( h^1_\alpha(\tau) \right)
		= \E_G\left( h^1(\tau) \right)
		\in \R
	\end{equation*}
	by dominated convergence.
\end{proof}%
\renewcommand{\qedsymbol}{$\blacksquare$}

\begin{proof}[Proof of {\hyperref[claim:gateaux_algebra_b]{claim B}}]%
	\renewcommand{\qedsymbol}{$\square$}
	Fix a $T \in \R_+$ such that $\lim_{t \uparrow T} G(t) < 1$, and note that $l_T : \R_+ \to \R$ given by
	\begin{equation*}
		l_T(t)
		\coloneqq r e^{-rt} \phi^0(t) \left[ x_t^\dag - x_t \right]
		\1_{[0,T)}(t)
	\end{equation*}
	is (Lebesgue-)integrable since $t \mapsto r \int_0^t e^{-rs}\phi^0(s) \dd s$ is $G$-integrable and $x,x^\dag$ are bounded.
	We may therefore apply \hyperref[lemma:int_parts]{IBP lemma} (p. \pageref{lemma:int_parts}) to $l_T$ and the measure associated with $G$ to obtain
	\begin{multline*}
		\int_{[0,T]} \left(
		r \int_0^t e^{-rs}\phi^0(s) \left[ x_s^\dag - x_s \right] \dd s
		\right) G( \dd t)
		\\
		= r \int_0^T [G(T)-G(t)]
		e^{-rt}\phi^0(t)
		\left[ x_t^\dag - x_t \right]
		\dd t .
	\end{multline*}

	Define $\widebar{T} \coloneqq \sup\{ t \in \R_+: G(t) < 1 \}$.
	If $\widebar{T} < \infty$
	and $G$ has an atom at $\widebar{T}$,
	then we obtain the desired equation by setting $T \coloneqq \widebar{T}$.
	Suppose for the remainder that this is not the case.
	It suffices to show that
	$\chi : \R_+ \to [0,\infty]$ given by
	\begin{equation*}
		\chi(t)
		\coloneqq
		\begin{cases}
			[1-G(t)] r e^{-rt} \phi^0(t)
			& \text{for $t < \widebar{T}$} \\
			0
			& \text{for $t \geq \widebar{T}$}
		\end{cases}
	\end{equation*}
	satisfies $\int_0^\infty \chi < \infty$ (so that it is Lebesgue-integrable),
	for then letting $T \uparrow \widebar{T}$ above yields the desired result by the dominated convergence theorem
	since $t \mapsto r \int_0^t e^{-rs}\phi^0(s) \dd s$ is $G$-integrable
	and $x,x^\dag$ are bounded.

	To show that $\int_0^\infty \chi$ is finite, note that applying the \hyperref[lemma:int_parts]{IBP lemma} as above to
	$x$ and $\widetilde{x}^\dag \coloneqq x + k$ (where $k \in \R$ is a constant)
	yields
	\begin{equation*}
		\int_{[0,T]} \left(
		r \int_0^t e^{-rs}\phi^0(s) \dd s
		\right) G(\dd t)
		= r \int_0^T [G(T)-G(t)]
		e^{-rt}\phi^0(t)
		\dd t
	\end{equation*}
	for any $T \in \R_+$ with $G(T) < 1$.
	Letting $T \uparrow \widebar{T}$
	and using the dominated (monotone) convergence theorem on the left-hand (right-hand) side yields
	\begin{equation*}
		\int_{\left[0,\widebar{T}\right)} \left(
		r \int_0^t e^{-rs}\phi^0(s) \dd s
		\right) G(\dd t)
		= \int_0^{\widebar{T}} \chi ,
	\end{equation*}
	which since $G$ has no atom at $\widebar{T}$ implies that
	\begin{equation*}
		\E_G\left( r \int_0^\tau e^{-rt} \phi^0(t) \dd t \right)
		= \int_0^{\widebar{T}} \chi
		= \int_0^{\infty} \chi .
	\end{equation*}
	Thus $\int_0^\infty \chi < \infty$
	since $t \mapsto r \int_0^t e^{-rs}\phi^0(s) \dd s$ is $G$-integrable.
\end{proof}%
\renewcommand{\qedsymbol}{$\blacksquare$}

\begin{proof}[Proof of {\hyperref[claim:gateaux_algebra_c]{claim C}}]%
	\renewcommand{\qedsymbol}{$\square$}
	Define a measure $\nu_+$ on $\R_+$ by
	\begin{equation*}
		\nu_+(A)
		\coloneqq \int_A \max\left\{\phi^1,0\right\} \dd G
		\quad \text{for any measurable $A \subseteq \R_+$.}
	\end{equation*}
	Since $\phi^1$ is $G$-integrable, $\nu_+$ is absolutely continuous with respect to (the measure associated with) $G$, with Radon--Nikod\'{y}m derivative $\max\{\phi^1,0\}$.
	Furthermore, the function $L : \R_+ \to \R$ defined by
	\begin{equation*}
		L(t) \coloneqq e^{-rt}\left( X^\dag_t - X_t \right)
		\quad \text{for each $t \in \R_+$}
	\end{equation*}
	is $\nu_+$-integrable since $X^\dag$ and $X$ are bounded.
	Thus for any $T \in \R_+$, we have
	\begin{equation*}
		\int_{[0,T]} L \max\left\{\phi^1,0\right\} \dd G
		= \int_{[0,T]} L \dd \nu_+ .
	\end{equation*}
	Furthermore, $L$ satisfies $L(t) = L(0) + \int_0^t l$, where $\smash{l(t) \coloneqq - e^{-rt} \bigl( x^\dag_t - x_t \bigr)}$.
	Thus for every $T \in \R_+$, the \hyperref[lemma:int_parts]{IBP lemma} yields
	\begin{multline*}
		\int_{[0,T]} e^{-rt}\left(X^\dag_t -X_t\right) \nu_+(\dd t)
		\\
		= e^{-rT} \left( X^\dag_T - X_T \right) \nu_+([0,T])
		+ r \int_0^T \nu_+([0,t])
		e^{-rt} \left( x^\dag_t - x_t \right) \dd t .
	\end{multline*}
	Applying the same argument to
	$\nu_-(A) \coloneqq - \int_A \min\left\{ \phi^1, 0 \right\} \dd G$
	and subtracting yields, for each $T \in \R_+$,
	\begin{multline*}
		\int_{[0,T]} e^{-rt}\left( X^\dag_t - X_t \right)
		\phi^1(t) G(\dd t)
		\\
		= e^{-rT} \left( X^\dag_T - X_T \right) \int_{[0,T]}
		\phi^1 \dd G
		+ r \int_0^T
		\left( \int_{[0,t]} \phi^1 \dd G \right)
		e^{-rt} \left( x^\dag_t - x_t \right) \dd t .
	\end{multline*}
	Letting $T \to \infty$ yields the desired equation
	by boundedness of $x,x^\dag,X,X^\dag$
	and dominated convergence.
\end{proof}%
\renewcommand{\qedsymbol}{$\blacksquare$}

With all three claims established,
the proof is complete.
\qed

\section*{\Cref*{suppl:construction:pf_lemma_approx_F} (p. \pageref*{suppl:construction:pf_lemma_approx_F})}
\label{omit:construction:pf_lemma_approx_F}
\addcontentsline{toc}{section}{Supplemental appendix \ref*{suppl:construction:pf_lemma_approx_F} (p. \pageref*{suppl:construction:pf_lemma_approx_F})}

Here's an example of a sequence of technologies
satisfying properties \ref*{bullet:thmC_pf_general:a}--\ref*{bullet:thmC_pf_general:d}.
Choose an $N \in \N$ such that $1/N < (u^0-u^1)/3$.
Given $n \geq N$, let $I_n \coloneqq [ 1/n , u^0 - 2/n ]$.
Fix $\gamma^n \in ( 0, 1/u^0n )$ and $\delta^n \in ( 0, 1/n )$.
For $j \in \{0,1\}$, define $f^j_n : I_n \rightarrow \R$ by
$\smash{f^j_n(u)
\coloneqq \frac{1}{\delta^n}\int_u^{u+\delta^n} F^{j+}
- \gamma^n u}$.
Given $\zeta^n \in (0,2/n)$, define $F^0_n,F^1_n : I_n \rightarrow \R$ by
\begin{align*}
	F^0_n(u)
	&\coloneqq F^0\left( u^\star + \frac{1}{n} \right)
	+ \int_{u^\star+1/n}^u f^0_n
	\\
	\text{and} \quad
	F^1_n(u)
	&\coloneqq F^1\left( u^\star + \frac{1}{n} \right)
	+ \int_{u^\star+1/n}^u f^1_n + \zeta^n .
\end{align*}
$F^0_n,F^1_n$ are strictly concave and differentiable, with (continuous) derivatives $f^0_n,f^1_n$ such that
\begin{equation*}
	F^{j-}(u+1/n) - 1
	\leq f^j_n(u)
	\leq F^{j+}(u)
	\quad \text{for $j \in \{0,1\}$ and $u \in I_n$.}
	\label{eq:derivative_bounds}
	\tag{$\natural$}
\end{equation*}
%
%
As $n \to \infty$,
$f^j_n \to F^{j+}$ pointwise for both $j \in \{0,1\}$ since $F^{j+}$ is right-continuous,
so that $F^0_n \to F^0$ and $F^1_n - \zeta^n \to F^1$ pointwise by the bounded convergence theorem.
In fact, the convergence is uniform \parencite[][Theorem 10.8, p. 90]{Rockafellar1970}.
Thus for any $\eps^n \in (0,1/n)$,
provided $\delta^n$ and $\gamma^n$ are sufficiently small,
we have that $F^0_n - F^0$ and $F^1_n - \zeta^n - F^1$ are bounded by $\eps^n$,
so that choosing $\zeta^n > 2 \eps^n$ ensures $F^1_n > F^0_n$ (as $F^1 \geq F^0$).
%

Since $F^1_n(1/n) > F^0_n(1/n)$, we may extend $F^0_n,F^1_n$ to $[0,u^0-2/n]$ while
preserving strict concavity, (continuous) differentiability and $F^1_n \geq F^0_n$
and ensuring that $F^1_n-F^0_n$ is strictly increasing on a neighbourhood of $0$
(which guarantees that $u^\star_n > 0$)
and that $F^{0\prime}_n,F^{1\prime}_n$ are bounded above on $(0,1/n]$ by $\max_{j \in \{0,1\}} F^{j+}(1/n) + 1$.
(The latter is possible because $F^{j\prime}_n(1/n) \leq F^{j+}(1/n)$ by \eqref{eq:derivative_bounds}.)
Further extend $F^1_n$ to all of $[0,\infty)$ by making it
differentiable and strictly concave on $\left[u^0-2/n,\infty\right)$, with derivative bounded below by $F^{1\prime}_n(u^0-2/n)-1/n$.
Since $F^1_n(u^0-2/n) > F^0_n(u^0-2/n)$, we may also extend $F^0_n$ to $[0,\infty)$ while
preserving strict concavity, (continuous) differentiability and $F^1_n \geq F^0_n$
and ensuring that $u^0_n \leq u^0$
and that $F^{0\prime}_n$ is bounded below by $F^{1-}(u^0-1/n)- 2$.
(The latter is possible since
$F^{1\prime}_n$ is bounded below by $F^{1\prime}_n(u^0-2/n)-1/n \geq F^{1-}(u^0-1/n)- 2$
by \eqref{eq:derivative_bounds}.)
If $u^\star_n$ is not a strict local maximum of $F^1_n-F^0_n$,
then we may make it so by decreasing $F^1_n$ pointwise on $[0,u^\star_n)$ by a small amount.

\begin{claim}
	\label{claim:monster}
	The sequence $(F^0_n,F^1_n)_{n=N}^\infty$ satisfies the model assumptions and properties \ref*{bullet:thmC_pf_general:a}--\ref*{bullet:thmC_pf_general:d}
	provided $\eps^n \in (0,1/n)$ is chosen sufficiently small.
\end{claim}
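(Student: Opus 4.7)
The plan is to verify the model assumptions and each of properties \ref*{bullet:thmC_pf_general:a}--\ref*{bullet:thmC_pf_general:d} in turn, essentially reading off each conclusion from the construction and the uniform-convergence estimates already noted.

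First, I would check the model assumptions for each $n \geq N$. By construction, each $F^j_n$ is continuous and strictly concave on $[0,\infty)$, with a unique interior maximiser $u^j_n$; the extensions at the two endpoints were designed precisely to preserve strict concavity and ensure $u^0_n \leq u^0$. Pointwise strict positivity $F^1_n > F^0_n$ follows from the choice $\zeta^n > 2\eps^n$, as already argued. The existence of a strict local maximum of $F^1_n - F^0_n$ at some $u^\star_n > 0$ is guaranteed by the final adjustment step of the construction.

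Second, for the convergence properties, I would combine the pointwise convergence $f^j_n \to F^{j+}$ (already established, via right-continuity of $F^{j+}$ and bounded convergence) with Rockafellar's theorem as cited to upgrade the pointwise convergence $F^0_n \to F^0$ and $F^1_n - \zeta^n \to F^1$ on $I_n$ to uniform convergence. The extensions of $F^j_n$ outside $I_n$ are controlled tightly via the explicit derivative bounds $F^{j\prime}_n \leq \max_{j \in \{0,1\}} F^{j+}(1/n) + 1$ near $0$ and $F^{0\prime}_n \geq F^{1-}(u^0-1/n) - 2$ near $u^0$, so uniform convergence propagates to all of $[0,\infty)$. Convergence of the peaks $u^j_n \to u^j$ and $u^\star_n \to u^\star$ then follows by a standard argument: uniform convergence together with strict concavity (resp.\ with the strict-local-maximum hypothesis at $u^\star$) forces unique maximisers to converge.

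The main obstacle I anticipate is not any individual verification --- each is essentially routine --- but the careful choice of the parameters $\delta^n, \gamma^n, \zeta^n, \eps^n$ and the size of the final strict-local-maximum perturbation so that all properties hold simultaneously and the total distance between $F^j_n$ and $F^j$ remains $\mathrm{O}(1/n)$. I would fix a sequential choice: first $\eps^n \in (0,1/n)$; then $\delta^n, \gamma^n$ so small that the pre-adjustment deviations $\abs*{F^j_n - F^j}$ are bounded by $\eps^n/4$; then $\zeta^n \in (\eps^n/2, 3\eps^n/4)$ so that $F^1_n > F^0_n$ without breaching the uniform bound; and finally a local perturbation of $F^1_n$ near $u^\star_n$ of magnitude at most $\eps^n/4$ to enforce strict local maximality there without disturbing strict concavity. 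Under this ordering, each of \ref*{bullet:thmC_pf_general:a}--\ref*{bullet:thmC_pf_general:d} is then checked by direct estimate against $\eps^n$.
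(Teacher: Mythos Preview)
Your proposal handles the model assumptions and the convergence of peaks (property \ref*{bullet:thmC_pf_general:b}), but it does not engage with properties \ref*{bullet:thmC_pf_general:c} and \ref*{bullet:thmC_pf_general:d}, which concern the \emph{derivatives} $F^{j\prime}_n$ rather than the functions themselves and which constitute the bulk of the paper's argument. Property \ref*{bullet:thmC_pf_general:c} requires that $(F^{0\prime}_n)_n$ and $(F^{1\prime}_n)_n$ be uniformly bounded below on each interval $[0,u]$ (for suitable $u$) and uniformly bounded above on each $[u,u^0]$; property \ref*{bullet:thmC_pf_general:d} requires that for any convergent sequence $u_n \to u$ with $0<u_n\leq u^0_n$, any limit of $F^{j\prime}_n(u_n)$ be a supergradient of $F^j$ at $u$. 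Neither of these follows from uniform convergence of $F^j_n$ to $F^j$, which is essentially all your ``convergence properties'' paragraph establishes.

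The paper proves both directly from the sandwich \eqref{eq:derivative_bounds} on $I_n$, combined with the explicit derivative caps built into the extensions near $0$ and near $u^0$, and treats the finitely many small-$n$ exceptions separately for \ref*{bullet:thmC_pf_general:c}. The boundary case $u=u^0$, $j=1$ in property \ref*{bullet:thmC_pf_general:d} is not routine: here the paper uses the specific lower bound $F^{1\prime}_n \geq F^{1\prime}_n(u^0-2/n)-1/n$ imposed on the right-hand extension, chained with \eqref{eq:derivative_bounds} at $u^0-2/n$, to obtain $F^{1\prime}_n(u_n) \geq F^{1-}\bigl(\min\{u_n+1/n,\,u^0-1/n\}\bigr)-\gamma^n u_n - 1/n$, and hence $\lim_n F^{1\prime}_n(u_n) \geq F^{1+}(u^0)$. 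You invoke the extension derivative bounds only as a device to propagate uniform convergence of the \emph{functions}; their actual role in the construction is to control derivatives for \ref*{bullet:thmC_pf_general:c}--\ref*{bullet:thmC_pf_general:d}. A smaller gap: the conflict of interest $u^1_n < u^0_n$ is not immediate from the construction; the paper deduces it from the bounds $u^0-1/n \leq u^0_n$ and $\abs*{u^1_n-u^1}\leq 1/n$ (valid for $\eps^n$ small) together with the choice $1/N < (u^0-u^1)/3$.
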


\begin{proof}
	It is clear that $(F^0_n,F^1_n)_{n=N}^\infty$ satisfies property \ref*{bullet:thmC_pf_general:a},
	and that for each $n \in \N$, $F^0_n,F^1_n$ satisfy the all of the model assumptions except for $u^1_n < u^0_n$ (the conflict of interest).

	For \ref*{bullet:thmC_pf_general:b} and the conflict of interest, since and $u^0$, $u^1$ and $u^\star$ are strict local maxima of (respectively) $F^0$, $F^1$ and $F^1-F^0$, we have
	\begin{equation*}
		u^0 - \frac{1}{n} \leq u^0_n \leq u^0 ,
		\quad
		\abs*{u^1_n-u^1} \leq \frac{1}{n}
		\quad \text{and} \quad
		\abs*{u^\star_n-u^\star} \leq \frac{1}{n}
	\end{equation*}
	for $\eps^n > 0$ sufficiently small, as $F^0_n - F^0$ and $F^1_n - (F^1+\zeta^n)$ are bounded by $\eps^n$ on $I_n$.
	In particular, $u^1_n < u^0_n$ (the conflict of interest) since $n \geq N$.
	Letting $n \to \infty$ yields \ref*{bullet:thmC_pf_general:b}.

	For the first part of \ref*{bullet:thmC_pf_general:c},
	fix a $u \in (0,u^0]$ at which $F^{1-}$ is finite.
	If $u=u^0$,
	then for each $n \geq N$,
	$F^{0\prime}_n, F^{1\prime}_n$ are bounded below by $F^{1-}(u^0-1/n) - 2$
	(by construction for $F^{0\prime}_n$, and by (\ref{eq:derivative_bounds}) for $F^{1\prime}_n$).
	Suppose instead that $u<u^0$.
	Choose a $u' \in (u,u^0)$, and let $N' \geq N$
	satisfy $1/N' < u < u^0 - 2/N'$ and $1/N' < u'-u$.
	Then $u$ belongs to $I_n$ for every $n \geq N'$,
	so that $F^{j\prime}_n(u) \geq F^{j-}(u+1/n) - 1 \geq F^{j-}(u') - 1$ by (\ref{eq:derivative_bounds}) for both $j \in \{0,1\}$.
	The set
	\begin{equation*}
		\Union_{j \in \{0,1\}}
		\left\{ F^{j\prime}_n(u) : n \in \{1,\dots,N'\} \right\}
	\end{equation*}
	is bounded since it is finite.
	Thus $(F^{0\prime}_n)_{n \in \N}$ and $(F^{1\prime}_n)_{n \in \N}$ are uniformly bounded below on $[0,u]$.

	The argument for the second part of \ref*{bullet:thmC_pf_general:c} is analogous:
	fix a $u \in [0,u^0)$ at which $F^{0+},F^{1+}$ are finite.
	If $u=0$,
	then $F^{0\prime}_n,F^{1\prime}_n$ are bounded above by $\max_{j \in \{0,1\}} F^{0+}_j(1/n)+1$ for all $n \geq N$
	by construction.
	If instead $u>0$,
	then $F^{j\prime}_n(u) \leq F^{j+}(u)$ for all $n \geq N$ such that $1/n < u < u^0 - 2/n$ by (\ref{eq:derivative_bounds}),
	and the set
	\begin{equation*}
		\Union_{j \in \{0,1\}}
		\left\{ F^{j\prime}_n(u) :
		n \in \left\{1, \dots,
		\floor*{1/u} \join
		\floor*{ \left. 2 \middle/ \left(u^0-u\right) \right. }
		\right\} \right\}
	\end{equation*}
	is finite and thus bounded;
	so $(F^{0\prime}_n)_{n \in \N}$ and $(F^{1\prime}_n)_{n \in \N}$ are uniformly bounded above on $\left[u,u^0\right]$.

	For \ref*{bullet:thmC_pf_general:d}, fix $j \in \{0,1\}$ and a convergent sequence $(u_n)_{n=N}^\infty$ with limit $u \in [0,\infty)$ such that $0 < u_n \leq u^0_n$ for each $n \geq N$ and $\lim_{n \to \infty} F^{j\prime}_n(u_n)$ exists.
	Since $u^0_n \leq u^0$ for each $n \geq N$, we have $u \leq u^0$.
	From above,
	\begin{equation*}
		F^{j\prime}_n(u_n)
		= f^j_n(u_n)
		\;
		\begin{cases}
			\leq F^{j+}(u_n)
			& \text{if $u_n \geq 1/n$} \\
			\geq F^{j-}(u_n+1/n) - \gamma^n u_n
			& \text{if $u_n \leq u^0-2/n$}.
		\end{cases}
		\label{eq:monster_fn_dag}
		\tag{$\dag$}
	\end{equation*}
	Letting $n \to \infty$ yields
	\begin{equation*}
		\lim_{n \to \infty} F^{j\prime}_n(u_n)
		\;
		\begin{cases}
			\leq F^{j-}(u)
			& \text{if $u>0$} \\
			\geq F^{j+}(u)
			& \text{if $u<u^0$.}
		\end{cases}
	\end{equation*}
	If $u \in (0,u^0)$, then both cases are satisfied,
	so
	\begin{equation*}
		F^{j+}(u) \leq \lim_{n \to \infty} F^{j\prime}_n(u_n) \leq F^{j+}(u) .
	\end{equation*}
	If $u=0$, then the second case applies, so that
	$F^{j+}(0) \leq \lim_{n \to \infty} F^{j\prime}_n(u_n)$,
	which suffices.
	If $u=u^0$, then the first case holds, so it remains only to show that $F^{j+}( u^0 ) \leq \lim_{n \to \infty} F^{j\prime}_n(u_n)$.
	For $j=0$, we have
	$F^{0+}(u^0) \leq 0 \leq \lim_{n \to \infty} F^{0\prime}_n(u_n)$
	since $u_n \leq u^0_n$ for each $n \geq N$.
	For $j=1$, \eqref{eq:monster_fn_dag} implies
	\begin{multline*}
		F^{1-}\left( \min\left\{ u_n + 1/n, u^0 - 1/n \right\} \right)
		- \gamma^n u_n - 1/n
		\quad
		\\
		\leq F^{1\prime}_n\left( \min\left\{ u_n, u^0 - 2/n \right\} \right) - 1/n
		\leq F^{1\prime}_n(u_n)
		\quad \text{for each $n \geq N$,}
	\end{multline*}
	where the second inequality holds since $F^{1\prime}_n$ is bounded below by $F^{1\prime}_n(u^0-2/n)-1/n$. Letting $n \to \infty$ yields
	$F^{1+}( u^0 )
	\leq \lim_{n \to \infty} F^{1\prime}_n(u_n)$.
\end{proof}

\section*{\Cref*{suppl:mcs:pf_lemma_euler_optimal} (p. \pageref*{suppl:mcs:pf_lemma_euler_optimal})}
\label{omit:mcs:pf_lemma_euler_optimal}
\addcontentsline{toc}{section}{Supplemental appendix \ref*{suppl:mcs:pf_lemma_euler_optimal} (p. \pageref*{suppl:mcs:pf_lemma_euler_optimal})}

\begin{proof}[Proof of \Cref*{observation:piG_str_conc}]
	It suffices to show that $\pi_G$ is strictly concave.
	For distinct $x,x^\dag \in \mathcal{X}$ (i.e. $x \neq x^\dag$ on a non-null set) and any $\lambda \in (0,1)$, we have
	\begin{multline*}
		\pi_G\left( \lambda x + (1-\lambda) x^\dag \right)
		\\
		\begin{aligned}
			&= \E_G\left( r \int_0^\tau e^{-rt}
			F^0\left( \lambda x_t + (1-\lambda) x^\dag_t \right) \dd t
			+ e^{-r\tau} F^1\left( \lambda X_\tau + (1-\lambda) X^\dag_\tau \right)
			\right)
			\\
			&> \E_G\biggl( r \int_0^\tau e^{-rt}
			\left[ \lambda F^0(x_t)
			+ (1-\lambda) F^0\left( x^\dag_t \right) \right] \dd t
			\\
			&\qquad\qquad\quad + e^{-r\tau} \left[ \lambda F^1(X_\tau)
			+ (1-\lambda) F^1\left( X^\dag_\tau \right) \right]
			\biggr)
			\\
			&= \lambda \pi_G(x) + (1-\lambda) \pi_G\left( x^\dag \right) ,
		\end{aligned}
	\end{multline*}
	where the inequality holds since $F^1$ is concave,
	$F^0$ is strictly concave, $G$ has unbounded support and $x \neq x^\dag$ on a non-null set.
\end{proof}



\printbibliography[heading=bibintoc]


\end{document}